\newtheorem{theorem}{Theorem}
\newtheorem{proposition}{Proposition}
\newtheorem{corollary}{Corollary}
\newtheorem{claim}{Claim}
\newcommand\cuparrow{%
  \mathrel{\ooalign{\hss$\cup$\hss\cr%
  \kern0.3ex\raise0.7ex\hbox{\scalebox{0.7}{$\downarrow$}}}}}
\newcommand\bigcuparrow{%
  \mathrel{\ooalign{\hss$\bigcup$\hss\cr%
  \kern0.55ex\raise0.7ex\hbox{\scalebox{0.7}{$\downarrow$}}}}}
\newcommand{\tw}{{\mathbf{tw}}}
\newcommand{\ww}{{\mathbf{w}}}
\newcommand{\Ocal}{{\mathcal{O}}}
\newcommand{\Xcal}{{\mathcal{X}}}
\newcommand{\Gcal}{{\mathcal{G}}}
\newcommand{\Ecal}{{\mathcal{E}}}
\newcommand{\Acal}{{\mathcal{A}}}
\newcommand{\aaa}{{a}}
\newcommand{\bbb}{{b}}
\newcommand{\dptable}{{\bf r}}
\newcommand{\hs}{\widehat{S}}
\newcommand{\sbf}{\textbf{s}}
\newcommand{\rbf}{{\bf r}}
\newcommand{\degs}[2]{{\texttt{deg}}_{#1}(#2)}
\newcommand{\ef}{\varnothing}
\newcommand{\rmc}{\textsf{rmc}}
\newcommand{\ins}{\textsf{ins}}
\newcommand{\shift}{\textsf{shft}}
\newcommand{\glue}{\textsf{glue}}
\newcommand{\glues}{\textsf{glueset}}
\newcommand{\proj}{\textsf{proj}}
\newcommand{\join}{\textsf{join}}
\newcommand{\opt}{\textsf{opt}}
\newcommand{\reduce}{\textsf{reduce}}
\newcommand{\probl}[3]{
\begin{flushleft}
\fbox{
\begin{minipage}{14.45cm}
\noindent {\sc #1}\\
          {\bf Input:} #2\\
          {\bf Output:} #3
\end{minipage}}
\medskip
\end{flushleft}
}
\author[Julien Baste]{Julien Baste}
\title[\textsc{Leafed Induced Subtree} in restricted graph classes]{The \textsc{Leafed Induced Subtree} in chordal and bounded treewidth graphs}
\affiliation{Univ. Lille, CNRS, Centrale Lille, UMR 9189 CRIStAL, F-59000 Lille, France}
\keywords{Fully Leafed Induced Subtrees, algorithms, chordal graphs, treewidth}
\begin{document}

\publicationdata{vol. 28:2}{2026}{9}{10.46298/dmtcs.11023}{2023-03-03;
2023-03-03; 2024-07-29; 2024-12-02; 2026-01-09}{2026-01-11}
\maketitle
\begin{abstract}

  In the \textsc{Fully Leafed Induced Subtrees}, one is given a graph $G$ and two integers $a$ and $b$ and the question is to find an induced subtree of $G$ with $a$ vertices and at least $b$ leaves. This problem is known to be \textsf{NP}-complete even when the input graph is $4$-regular.
  Polynomial algorithms are known when the input graph is restricted to be a tree or series-parallel.
  In this paper we generalize these results by providing an \textsf{FPT} algorithm parameterized by treewidth.
  We also provide a polynomial algorithm when the input graph is restricted to be a chordal graph.

\end{abstract}

\section{Introduction}

An important focus in the graph community is to find substructures in a graph.
Since the 80's, a particular attention is given to subgraphs that are trees, see for instance the works of \cite{ErSaSo1986}, \cite{KlWe1991}, \cite{PaTcXu1984}, and \cite{WaArUn2014}.
While finding a spanning tree in a graph is almost trivial, adding constraints on the required spanning tree quickly increases the complexity of the problem.
For instance, one can ask for a spanning tree that maximizes its number of leaves.
This problem, called \textsc{Maximum Leaf Spanning Tree}, is known to be \textsf{NP}-hard even when the input graph is $4$-regular as shown by~\cite{GaJo1979}.

Finding subgraphs of a graph that are trees with additional constraints find application in telecommunication networks, as exposed by~\cite{BoChLi2005} and \cite{ChLjRa2015}, in data mining, as presented by~\cite{DeFeTiSaMc2014}, or in information retrieval, as shown by \cite{Za2005}.
In this paper, we focus on \textsc{Fully Leafed Induced Subtrees}.
This notion is introduced by~\cite{BlCaGo2018} and is later used by~\cite{AbBlGo2018} and \cite{BlCaGoLaNaVa2018}.
In this problem, one is given a graph and the goal is to find the induced subtree of this graph that maximizes the number of leaves.
As discussed by~\cite{BlCaGoLaNaVa2018}, the fully leafed induced subtrees are expected to have application in chemical graph theory where they can represent natural structures like molecular network. 

For full generality, we considered the {\textsc{Restricted Leafed Induced Subtree}} problem.
\probl
{\textsc{Restricted Leafed Induced Subtree}}
{A graph $G$, a vertex $u\in V(G)$, and two integers $\aaa$ and $\bbb$.}
{Does there exist an induced subtree of $G$ containing $u$ as an internal vertex with exactly $\aaa$ vertices and at least $\bbb$ leaves?}

First note that when $\aaa \leq \bbb$, the problem is trivially solvable.
Through the paper, we always assume that $\aaa > \bbb$.
\cite{BlCaGoLaNaVa2018} show that the problem is \textsf{NP}-hard.
They also  provide a polynomial algorithm to solve {\textsc{Restricted Leafed Induced Subtree}} in trees and ask whether the problem remains polynomial in chordal graphs or in graphs of bounded treewidth.
\cite{AbBlGo2018} provide a first step in the resolution of this problem by providing a polynomial algorithm for series-parallel graphs, i.e., graphs of treewidth at most $2$.

The remainder of this paper provides positive answers to the open questions asked by~\cite{BlCaGoLaNaVa2018} by providing a polynomial algorithm for  {\textsc{Restricted Leafed Induced Subtree}} when the input graph is chordal as well as an \textsf{FPT} algorithm parameterized by treewidth.
In Section~\ref{sec:prelim}, we provide some needed definitions.
In Section~\ref{sec:simple}, we provide a simple dynamic programming algorithm that solves {\textsc{Restricted Leafed Induced Subtree}} in polynomial time when the input graph is chordal. In Section~\ref{sec:complexe} we present
 an \textsf{FPT} algorithm parameterized by treewidth based on the techniques introduced by~\cite{BoCyKrNe2015}.

\section{Preliminaries}
\label{sec:prelim}
\noindent\textbf{Basic definitions.}
Given two integers $a$ and $b$, we denote by $[a,b]$ the set of integers $\{x \mid a \leq x \leq b\}$.

In the paper, all the graphs are non-oriented and without loops nor multi-edges.
Given a graph $G$, we denote by $V(G)$ (resp. $E(G)$), the set of vertices (resp. edges) of $G$.
Given a vertex $v$ of $G$, we denote by $\texttt{deg}_{G}(v)$ the degree of the vertex $v$ in the graph $G$.
Given a set $S \subseteq V(G)$, we denote by $G[S]$ the subgraph of $G$ induced by $S$.

\bigskip
\noindent\textbf{Tree decompositions.} A \emph{tree decomposition} of a graph $G$ is a pair ${\cal D}=(T,{\cal X})$, where $T$ is a tree
and ${\cal X}=\{X_{t}\mid t\in V(T)\}$ is a collection of subsets of $V(G)$
 such that:
\begin{itemize}
\item $\bigcup_{t \in V(T)} X_t = V(G)$,
\item for every edge $\{u,v\} \in E$, there is a $t \in V(T)$ such that $\{u, v\} \subseteq X_t$, and
\item for each $\{x,y,z\} \subseteq V(T)$ such that $z$ lies on the unique path between $x$ and $y$ in $T$,  $X_x \cap X_y \subseteq X_z$.
\end{itemize}
We call the vertices of $T$ {\em nodes} of ${\cal D}$ and the sets in ${\cal X}$ {\em bags} of ${\cal D}$. The \emph{width} of a  tree decomposition ${\cal D}=(T,{\cal X})$ is $\max_{t \in V(T)} |X_t| - 1$.
The \emph{treewidth} of a graph $G$, denoted by $\tw(G)$, is the smallest integer $w$ such that there exists a tree decomposition of $G$ of width at most $w$.

\bigskip

\noindent
\textbf{Nice tree decompositions.} Let ${\cal D}=(T,{\cal X})$
be a tree decomposition of $G$, $r$ be a vertex of $T$, and   ${\cal G}=\{G_{t}\mid t\in V(T)\}$ be
a collection of subgraphs of   $G$, indexed by the vertices of $T$.
We say that the triple $({\cal D},r,{\cal G})$ is a
\emph{nice tree decomposition} of $G$ if the following conditions hold:
\begin{itemize}

\item $X_{r} = \emptyset$ and $G_{r}=G$.
\item Each node of ${\cal D}$ has at most two children in $T$.
\item For each leaf $t \in V(T)$, $X_t = \emptyset$ and $G_{t}=(\emptyset,\emptyset).$ Such $t$ is called a {\em leaf node}.
\item If $t \in V(T)$ has exactly one child $t'$, then either:
\begin{itemize}
\item $X_t = X_{t'}\cup \{v_{\rm insert}\}$ for some $v_{\rm insert} \not \in X_{t'}$ and $G_{t}=G[V(G_{t'})\cup\{v_{\rm insert}\}]$.
  In this case, the node $t$ is called \emph{introduce vertex}  node   and the vertex $v_{\rm insert}$ is the {\em insertion vertex} of $X_{t}$.
\item $X_t = X_{t'} \setminus \{v_{\rm forget}\}$ for some $v_{\rm forget} \in X_{t'}$ and $G_{t}=G_{t'}$.
  In this case, the node $t$ is called   \emph{forget vertex} node and $v_{\rm forget}$ is the {\em forget vertex} of $X_{t}$.
\end{itemize}
\item If $t \in V(T)$ has exactly two children $t'$ and $t''$, then $X_{t} = X_{t'} = X_{t''}$, and $G_t = G[V(G_{t'})\cup V(G_{t''})]$. The node $t$ is called a \emph{join} node.
\end{itemize}

As discussed by~\cite{Kl1994}, given a tree decomposition, it is possible to transform it in linear time to a {\sl nice} new one of the same width.
Note that for each $t \in T$, $G_t$ corresponds to the subgraph induced by all the vertices introduced in the nodes of the subtree of $T$ rooted at $t$.
Moreover, \cite{Ko2021} shows that we can find in time $2^{\Ocal(\tw)}\cdot n$ a tree decomposition of width $2\cdot \tw+1$ of any graph $G$. 
Since, in this section, we focus on single-exponential algorithms, we may assume that a nice tree decomposition of width $w = 2 \cdot \tw +1$
is given with the input.
For each $t \in V(T)$, we define $V_t = V(G_t)$.

\section{Chordal graphs}
\label{sec:simple}

In this section we present a polynomial time algorithm that solves \textsc{Restricted Leafed Induced Subtree} when the input graph is chordal.
Given a $n$-vertex chordal graph $G$, using the perfect elimination scheme of chordal graphs, as done by~\cite{Kl1994} in his Lemma 13.1.2, for $k$-trees, a nice tree decomposition $((T,\mathcal{X}), r, \mathcal{G})$ such that for each $t \in V(T)$, $X_t$ induces a clique in $G$, can be obtained in time $O(n^2)$.
Note that in this case, the size of the set $X_t$, $t \in V(T)$, is unbounded.
Our algorithm will strongly use the property that each $X_t$, $t\in V(T)$, induces a clique and in this section, we always assume that such a nice tree decomposition is given.
The following claim is the main ingredient that we will use in order to provide a polynomial algorithm.

\begin{claim}
  \label{claim:two}
  Let $G$ be a chordal graph,
  let $((T,\mathcal{X}), r, \mathcal{G})$ be a nice tree decomposition of $G$, and
  let $R$ be an induced subtree of $G$.
  For each $t \in V(T)$, $|V(R) \cap X_t| \leq 2$.
\end{claim}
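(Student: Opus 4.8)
The plan is to argue by contradiction, exploiting the tension between two facts: an induced subtree is acyclic (hence triangle-free), whereas any three vertices lying in a common clique-bag are pairwise adjacent in $G$.

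First I would suppose toward a contradiction that there is a node $t \in V(T)$ with $|V(R) \cap X_t| \geq 3$, and fix three pairwise distinct vertices $x, y, z \in V(R) \cap X_t$. By the standing assumption of this section, the bag $X_t$ induces a clique in $G$, so $\{x,y\}, \{y,z\}, \{x,z\} \in E(G)$. Since $R$ is an \emph{induced} subtree of $G$, we have $R = G[V(R)]$, and because $x, y, z \in V(R)$, all three of these edges belong to $E(R)$. Then $x, y, z$ together with those edges form a cycle of length three in $R$, contradicting the fact that $R$, being a tree, is acyclic. Hence $|V(R) \cap X_t| \leq 2$ for every $t \in V(T)$.

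I do not anticipate a genuine obstacle here; the argument is immediate once the two hypotheses are in place. The only point worth stating explicitly is that both hypotheses are actually used: the word ``induced'' is what guarantees that the clique edges among $x, y, z$ survive in $R$, and the clique property of the bags is precisely what prevents three mutually non-adjacent vertices of $R$ from sharing a bag. Everything else is routine.
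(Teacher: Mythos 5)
Your proof is correct and follows exactly the same route as the paper's: assume three vertices of $V(R)$ lie in a common bag, use the clique property of the bags to get a triangle, and contradict the acyclicity of the induced subtree. You are slightly more explicit than the paper in noting that the clique property of $X_t$ is the section's standing assumption rather than part of the claim's hypotheses, which is a fair point of care, but the argument is the same.
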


\begin{proof}
  Assume that $|V(R) \cap X_t| \geq 3$, and let $u$, $v$, and $w$ be three distinct vertices of $V(R) \cap X_t$.
  As $\{u,v,w\} \subseteq X_t$, the three vertices induce a cycle.
  As $\{u,v,w\} \subseteq V(R)$, this contradicts the assertion that $R$ is an induced subtree.
  The claim follows.
\end{proof}

We now proceed to the description of a dynamic programming algorithm that processes the nice tree decomposition of the input graph.
Let $(G,v_0,\aaa,\bbb)$ be an instance of \textsc{Restricted Leafed Induced Subtree} such that $G$ is chordal and let $((T,\mathcal{X}), r, \mathcal{G})$ be a nice tree decomposition such that for each $t \in V(T)$, $X_t$ induces a clique in $G$.
Without loss of generality, up to a transformation that increases the size of $T$ by $O(n)$ and that can be done in linear time, we can assume that $((T,\mathcal{X}), r, \mathcal{G})$ is such that
$X_{r'} = \{v_0\}$ where $r'$ is the only child of $r$.

We define, for each $t \in V(T)$, the set
$\mathcal{I}_t = \{(S,d,i) \mid S \subseteq X_t, |S| \leq 2, d \colon S \to \{0,1,2\}, i\in [0,\aaa]\}$.
Using Claim~\ref{claim:two}, we know that if $|S| \geq 3$, then the associated element will not correspond to a partial solution.
This is why we can restrict our dynamic programming table to only consider the sets $S$ such that $|S| \leq 2$.
We also define a function $\dptable_t \colon \mathcal{I}_t \to \mathbb{N}$ such that for each $(S,d,i)\in \mathcal{I}_t$, $\dptable_t(S,d,i)$ is the maximum $\ell$ such that
there exists $\hs \subseteq V(G_t)$, called the \emph{witnesses} of $(S,d,i)$, that satisfies:
\begin{itemize}
\item $|\hs| = i$,
\item $\hs \cap X_t = S$,
\item if $v_0 \in V(G_t)$, then $v_0 \in \hat{S}$,
\item $G_t[\hs]$ is a tree,
\item for each $v \in S$, $d(v) = \min(\texttt{deg}_{G[\hs]}(v), 2)$, and
\item $\hs$ contains at least $\ell$ leaves in $V_t \setminus X_t$.
\end{itemize}

If such witnesses do not exist, we set $\dptable(S,d,i) = \bot$.

With this definition,
$(G,v_0,\aaa,\bbb)$ is a positive instance of \textsc{Restricted Leafed Induced Subtree} if and only if 
$\dptable_r(\emptyset,\varnothing,\aaa) \geq \bbb$.
For each $t \in V(T)$, we assume that we have already computed $\dptable_{t'}$ for each child $t'$ of $t$, and we proceed to the computation of $\dptable_t$.
We distinguish
several cases depending on the type of node $t$.

\begin{description}
\item[Leaf.] $\mathcal{I}_t = \{(\emptyset,\varnothing,i)\mid i \in [0,\aaa]\}$, $\dptable_t(\emptyset,\varnothing,0) = 0$ and for each $i \in [1,\aaa]$, $\dptable_t(\emptyset,\varnothing,i) = \bot$.
\item[Introduce vertex.] If $v$ is the insertion vertex of $X_t$ and $t'$ is the child of $t$, then for each $(S,d,i) \in \mathcal{I}_t$,

  \renewcommand{\arraystretch}{1.3}
  \hspace{-0.6cm}\begin{tabular}{l}
    $\dptable_t(S,d,i) =$ \\
    $ \max\big(\{\dptable_{t'}(S,d,i) \mid (S,d,i) \in \mathcal{I}_{t'}, v \not = v_0\}$ \\
    $\cup\ \{\dptable_{t'}(\emptyset,\varnothing,0) \mid (\emptyset,\varnothing,0) \in \mathcal{I}_{t'}, S = \{v\}, d = \{(v,0)\}, i = 1\}$ \\
    $\cup\ \{\dptable_{t'}(S',d',i-1) \mid (S',d',i-1) \in \mathcal{I}_{t'}, $\\
    $~~~~~~~~~~~~~~~~~~|S'| = 1,  S'= \{u\}, S = \{u,v\}, d(u) = \min(d'(u)+1,2), d(v) = 1\}).$\\
  \end{tabular}

\item[Forget vertex.] If  $v$ is the forget vertex of $X_t$ and $t'$ is the child of $t$, then for each $(S,d,i) \in \mathcal{I}_t$,

  \hspace{-0.6cm}\begin{tabular}{l}
      $\dptable_t(S,d,i) = $ \\
      $ \max\big(\{\dptable_{t'}(S,d,i) \mid (S,d,i) \in \mathcal{I}_{t'}, v \not \in S, v \neq v_0 \}$ \\
      $\cup\ \{\dptable_{t'}(S',d',i) +1 \mid (S',d',i) \in \mathcal{I}_{t'}, S' = S \cup \{v\}, |S| = 1, (v,1) \in d', d = d' \setminus \{(v,1)\}, v \neq v_0\})$ \\
      $\cup\ \{\dptable_{t'}(S',d',i) \mid (S',d',i) \in \mathcal{I}_{t'}, S' = S \cup \{v\}, |S| = 1,(v,2) \in d', d = d' \setminus \{(v,2)\}, v \neq v_0\}$ \\
      $\cup\ \{\dptable_{t'}(S',d',i) \mid (S',d',i) \in \mathcal{I}_{t'}, v = v_0, S' = \{v_0\}, S = \emptyset, (v_0,2) \in d', d = \varnothing\}.$ \\
    
  \end{tabular}


\item[Join.]  If $t'$ and $t''$ are the children of $t$, then for each $(S,d,i) \in \mathcal{I}_t$,

  \hspace{-0.6cm}\begin{tabular}{l}
    $\dptable_t(S,d,i) = $ \\
    $ \max(\{\dptable_{t'}(S,d',i') + \dptable_{t''}(S,d'',i'') \mid (S,d',i') \in \mathcal{I}_{t'}, (S,d'',i'')\in \mathcal{I}_{t''}, i = i' + i'' - |S|~~~~~~~~~~~~~~~~~~~~~~~~~$ \\
                    $ ~~~~~~~~~~~~~~~~~~~~~~~~~~~~~~~~~~~~~~~~~~~~~~~~~~~~~~~~~~~~~~\forall v \in S, d(v) = \min(2, d'(v) + d''(v) - (|S|-1))\}).$ \\
  \end{tabular}


\end{description}

We first discuss about the correctness of the algorithm.
When $t$ is a leaf, then by definition of the nice tree decomposition, $X_t$ is empty, thus the only $\hs$ that can satisfy the requirement is the empty set.
When $t$ introduces the vertex $v$, then either this vertex will not be in the partial solution, corresponding to the first line,  it is the first vertex to be added to the partial solution, corresponding to the second line, or it is a new vertex added to an already existing partial solution, corresponding to the third line.
If $t$ forgets the vertex $v$, then we distinguish the situation where $v$ is not in the partial solution we consider, first line,
it is a leaf of the induced subtree we are constructing, second line,
it is an internal vertex of the induced subtree we are constructing, third and fourth line depending whether the vertex is $v_0$ or not.
Indeed, note that, as we are working with the restricted version of \textsc{Leafed Induced Subtree}, we need to check that the vertex $v_0$ is an internal node.
In particular, when we forget a vertex, we need to ensure that this vertex will not be in a component disconnected to the component containing $v_0$ that may arrive later.
In the case where $t$ is a join node, we simply add the size of the partial solution of each side and check the new degree of each vertex in $S$.

Let us analyze the running time of this algorithm.
As, for each $t \in V(T)$, $S$  is a subset of $X_t$ of size at most $2$, $|X_t| \leq n$, $d$ can take at most $6$ values, and $i \in [0,\aaa]$, thus we have that $|\mathcal{I}_t| \leq 6 \cdot (\aaa +1) \cdot n^2$.
Note that
if $t$ is a leaf, then $\dptable_t$ can be computed in time $\Ocal(1)$,
if $t$ is an introduce vertex or a forget vertex node, and $t'$ is the child of $t$, then $\dptable_t$ can be computed in time $\Ocal(|\mathcal{I}_{t'}|\cdot |X_t|)$,
and if $t$ is a join node, and $t'$ and $t''$ are the two children of $t$, then  $\dptable_t$ can be computed in time $\Ocal(|\mathcal{I}_{t'}|\cdot |\mathcal{I}_{t''}| \cdot |X_t|)$.
Using the fact that, for a join node, we only consider the situation where the set $S$ is the same in both $\mathcal{I}_{t'}$ and $\mathcal{I}_{t''}$, we obtain that $\dptable_t$ can be computed in time $\Ocal(36\cdot a^2 \cdot n^2 \cdot |X_t|)$.
As the number of nodes of $T$ is linear, and the size of each bag is bounded by the number of vertices of the input graph, we obtain the following theorem.

\begin{theorem}
  \textsc{Restricted Leafed Induced Subtree} on an instance $(G,v_0,\aaa,\bbb)$ where $G$ is a chordal graph can be solved in time $\Ocal(\aaa^2\cdot n^4)$, where $n$ is the number of vertices of $G$.
\end{theorem}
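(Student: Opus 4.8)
The plan is to prove that the displayed recurrences for $\dptable_t$ are correct by a bottom-up induction over the nodes of $T$, and then to read off the running time from the table sizes already computed. The invariant I would carry is: for every $t\in V(T)$ and every $(S,d,i)\in\mathcal{I}_t$, the value $\dptable_t(S,d,i)$ equals the maximum number of leaves lying in $V_t\setminus X_t$ over all witnesses $\hs$ of $(S,d,i)$ that moreover satisfy ``if $v_0\in V_t\setminus X_t$ then $v_0\in\hs$ and $d_{G_t[\hs]}(v_0)\ge 2$'' (and $\dptable_t(S,d,i)=\bot$ when there is no such $\hs$). This extra clause is exactly what the conditions ``$v\neq v_0$'' and ``$(v,2)\in d'$'' in the \textbf{Forget vertex} case encode: a vertex that has left the current bag never reappears in a bag, so a partial solution can be completed to a solution of \textsc{Restricted Leafed Induced Subtree} only if it already contains $v_0$ as an internal vertex of its tree. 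At the root, where $X_r=\emptyset$ and $G_r=G$, the invariant gives precisely $\dptable_r(\emptyset,\varnothing,\aaa)\ge\bbb$ if and only if $G$ has an induced subtree on $\aaa$ vertices in which $v_0$ is internal and which has at least $\bbb$ leaves, i.e.\ if and only if $(G,v_0,\aaa,\bbb)$ is a yes-instance. The restriction to $|S|\le 2$ in $\mathcal{I}_t$ is lossless by Claim~\ref{claim:two}, since for any induced subtree $R$ the candidate $V(R)\cap V_t$ meets $X_t$ in at most two vertices.

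I would run the induction through the node types in the order leaf, introduce vertex, introduce edge, forget vertex, join; the join case is where I expect the difficulty. The leaf case is immediate. For an introduce-vertex node, the point is that $X_t$ is a clique and $G_t=G[V_t]$ in this section, so the inserted vertex $v$ is adjacent to all other vertices of $X_t$ and, by the connectivity axiom of tree decompositions, to no vertex of $V_{t'}\setminus X_{t'}$; hence a witness at $t$ either avoids $v$, equals $\{v\}$, or contains $v$ as a degree-one vertex attached to the unique element of $S\setminus\{v\}$, which are the three terms of the recurrence, with the bookkeeping $d(v)=1$, $d(u)=\min(d'(u)+1,2)$, and no change in leaf count because $v\in X_t$. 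The introduce-edge case is vacuous by the conventions of this section. For a forget-vertex node one splits on whether the forgotten vertex $v$ belongs to $\hs$ and, if it does, on $d_{G_t[\hs]}(v)$: degree $\ge 2$ makes $v$ an internal vertex of $V_t\setminus X_t$ (value unchanged, $d$ loses $v$), degree $1$ makes it a leaf of $V_t\setminus X_t$ (value $+1$, and $v\neq v_0$ is needed), degree $0$ forces $\hs=\{v\}$; the other ``$v\neq v_0$'' guards enforce the clause of the invariant. For a join node, write $\hs'=\hs\cap V_{t'}$ and $\hs''=\hs\cap V_{t''}$; the tree-decomposition axioms give $\hs'\cap\hs''=\hs\cap X_t=S$ and no $G_t$-edge between $\hs'\setminus S$ and $\hs''\setminus S$, while $E(G_{t'})\cap E(G_{t''})=\emptyset$, so $G_t[\hs]$ is the union of the trees $G_{t'}[\hs']$ and $G_{t''}[\hs'']$ glued along $S$. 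The key observation is that this forces $|S|\le 1$: if $|S|=2$, each side contains a path between the two vertices of $S$, the two paths are internally disjoint, and together they form a cycle. When $|S|\le 1$ the glued object is again a tree, the degree of a vertex of $S$ is the sum of its two side-degrees (recorded by $\min(2,d'(v)+d''(v)-2(|S|-1))$), sizes satisfy $i=i'+i''-|S|$, and leaf counts add since $V_{t'}\setminus X_t$ and $V_{t''}\setminus X_t$ partition $V_t\setminus X_t$; one checks that the contributing combinations are precisely the genuine reassemblies into a connected subtree, using for $S=\emptyset$ that $\dptable_{t''}(\emptyset,\varnothing,0)$ equals $0$ unless $v_0$ has already been forgotten below $t''$, in which case it is $\bot$ — again the correct value, since then any witness with $\hs\cap V_{t''}=\emptyset$ omits $v_0$ forever.

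It remains to dispose of one degenerate case and to assemble the running time. Since $\bbb\ge 3$, any induced subtree with at least $\bbb$ leaves has at least four vertices; hence if $\aaa\le 3$ the instance is trivially negative, and otherwise every solution is connected with at least four vertices. This is why the \textbf{Forget vertex} recurrence may ignore a one-vertex partial tree when its vertex is forgotten (the $d_{G_t[\hs]}(v)=0$ subcase): such an object can neither grow nor be a solution. For the running time one only collects the estimates displayed above: $|\mathcal{I}_t|=\Ocal(\aaa\,n^2)$ because there are $\Ocal(n^2)$ sets $S\subseteq X_t$ with $|S|\le 2$, a constant number of maps $d$, and $\aaa+1$ values $i$; a leaf node costs $\Ocal(1)$ per entry, an introduce, forget or introduce-edge node costs $\Ocal(|\mathcal{I}_{t'}|\cdot|X_t|)$, and a join node, once one uses that the set $S$ must agree on both children, costs $\Ocal(\aaa^2 n^2\cdot|X_t|)=\Ocal(\aaa^2 n^3)$. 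Multiplying by the $\Ocal(n)$ nodes of $T$ gives the claimed bound $\Ocal(\aaa^2 n^4)$. The genuinely delicate step in this programme is the join case, in particular verifying that the degree bookkeeping on $S$ and the ``$v_0$ is internal'' side condition interact correctly.
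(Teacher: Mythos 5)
Your overall strategy --- a bottom-up induction showing that $\dptable_t(S,d,i)$ records the stated optimum (with the ``$v_0$ must already be internal once forgotten'' clause made explicit), followed by the table-size accounting --- is the same as the paper's, and the leaf, introduce-vertex, introduce-edge and forget-vertex cases, as well as the running-time computation, match the paper's own (informal) correctness discussion. The problem is the join node, which you yourself identify as the delicate step: your ``key observation'' that a feasible join forces $|S|\le 1$ is false. You argue that for $S=\{u,v\}$ the two sides contribute internally disjoint $u$--$v$ paths whose union is a cycle. But in this section every bag induces a clique, so $u$ and $v$ are adjacent in $G$; since each side's witness is an \emph{induced} tree containing both, the unique $u$--$v$ path on each side is the single edge $uv$. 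The two ``paths'' coincide rather than being internally disjoint, no cycle arises, and $|S|=2$ is realizable at a join node (take the induced path $a$--$u$--$v$--$b$ with bags $\{a,u\}$, $\{u,v\}$, $\{v,b\}$ and a join node with bag $\{u,v\}$). So the case your induction dismisses is precisely the one requiring nontrivial verification: with $|S|=2$ the edge $uv$ is counted on both sides, the sizes combine as $i=i'+i''-2$, but the degree of each $v\in S$ combines as $d'(v)+d''(v)-1$ (one shared edge), not as a sum over disjoint edge sets. Your proof never checks the recurrence in this configuration, so the correctness argument for the join node is incomplete.

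Had you worked this case out, you would also have been forced to confront the paper's stated update $d(v)=\min(2,\,d'(v)+d''(v)-2\cdot(|S|-1))$, which for $|S|=2$ subtracts $2$ where the single shared edge $uv$ accounts for only $1$; reconciling your invariant with that formula (or flagging the discrepancy) is exactly the work the $|S|\le 1$ shortcut let you skip. The remaining minor points --- discarding the forgotten singleton with $d'(v)=0$ because $\bbb\ge 3$, and the connectivity discussion for $S=\emptyset$ at a join --- are reasonable, though the latter is only sketched.
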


\section{A single exponential algorithm parameterized by treewidth}
\label{sec:complexe}

In this section, we provide  an \textsf{FPT} algorithm parameterized by treewidth for the {\textsc{Restricted Leafed Induced Subtree}} problem.
We would like to mention that using classical dynamic programming techniques like done in Section~\ref{sec:simple}, we would obtain an algorithm running in time $2^{\Ocal(\tw \log \tw)}\cdot n^{\Ocal(1)}$.
The purpose of this section is to provide an algorithm running in time $2^{\Ocal(\tw)}\cdot \tw^{\Ocal(1)} \cdot n^7$.
For this, we use the techniques introduced by~\cite{BoCyKrNe2015}.
In order to describe the algorithm, we first need to restate some of the tools they introduced.

\bigskip

Let $U$ be a set.
We define $\Pi(U)$ to be the set of all partitions of $U$.
Given two partitions $p$ and $q$ of $U$, we define the coarsening relation $\sqsubseteq$ such that $p \sqsubseteq q$ if for each $S \in p$, there exists $S' \in q$ such that $S \subseteq S'$.
$(\Pi(U),\sqsubseteq)$ defines a lattice with maximum element $\{\{U\}\}$ and minimum element $\{\{x\}\mid x \in U\}$.
On this lattice, we denote by $\sqcap$ the meet operation and by $\sqcup$ the join operation.

Let $p \in \Pi(U)$.
For $X \subseteq U$ we denote by $p_{\downarrow X} = \{S \cap X \mid S \in p, S \cap X \not = \emptyset \}\in \Pi(X)$
 the partition obtained by removing all elements not in $X$ from $p$, and analogously for
$U \subseteq X$ we denote $p_{\uparrow X} = p \cup \{\{x\}\mid x \in X \setminus U\}\in \Pi(X)$  the partition obtained by adding to $p$ a singleton for each element in
$ X \setminus U$.
Given a subset $S$ of $U$, we define the partition $U[S] = \{\{x\}\mid x \in U\setminus S\} \cup \{S\}$.

A set of \emph{weighted partitions} is a set $\mathcal{A} \subseteq \Pi(U) \times \mathbb{N}$.
We also define $\rmc(\mathcal{A}) = \{(p,w) \in \mathcal{A} \mid \nexists (p,w') \in \mathcal{A}: w' < w \}$.

We now stats some operations on weighted partitions as defined by~\cite{BoCyKrNe2015}.
Let $U$ be a set and $\mathcal{A} \subseteq \Pi(U) \times \mathbb{N}$.

\begin{description}
\item[Union.] Given $\mathcal{B} \subseteq \Pi(U) \times \mathbb{N}$, we define $\mathcal{A} \cuparrow \mathcal{B} = \rmc(\mathcal{A} \cup \mathcal{B})$.
\item[Insert.] Given a set $X$ such that $X \cap U = \emptyset$, we define $\ins(X,\mathcal{A})= \{(p_{\uparrow U \cup X},w) \mid (p,w) \in \mathcal{A}\}$.
\item[Shift.] Given $w' \in \mathbb{N}$, we define $\shift(w',\mathcal{A}) = \{(p,w+w') \mid (p,w) \in \mathcal{A}\}$.
\item[Glue.] Given two elements $u$ and $v$, we define $\hat{U} = U \cup \{u,v\}$ and $\glue(\{u,v\},\mathcal{A}) \subseteq \Pi(\hat{U}) \times \mathbb{N}$ as\\$\glue(\{u,v\}, \mathcal{A}) = \rmc(\{(\hat{U}[uv] \sqcup p_{\uparrow \hat{U}}, w)\mid (p,w) \in \mathcal{A}\})$.
\item[Project.] Given $X \subseteq U$, we define $\overline{X} = U \setminus X$ and $\proj(X, \mathcal{A}) \subseteq \Pi(\overline{X}) \times \mathbb{N}$ as\\
 $\proj(X,\mathcal{A}) = \rmc(\{(p_{\downarrow \overline{X}},w) \mid (p,w) \in \mathcal{A}, \forall e \in X : \exists e' \in \overline{X} : p \sqsubseteq U[ee']\})$.
\item[Join.] Given a set $U'$, $\mathcal{B} \subseteq \Pi(U') \times \mathbb{N}$, and $\hat{U} = U \cup U'$, we define $\join(\mathcal{A}, \mathcal{B}) \subseteq \Pi(\hat{U}) \times \mathbb{N}$ as\\
 $\join(\mathcal{A}, \mathcal{B}) = \rmc(\{(p_{\uparrow \hat{U}}\sqcup q_{\uparrow \hat{U}}, w_1+w_2) \mid (p,w_1) \in \mathcal{A}, (q,w_2) \in \mathcal{B}\})$.
\end{description}

For the convenience of this paper, we also introduce the following operation, not present by \cite{BoCyKrNe2015}, that is a variation of glue.
\begin{description}
\item[Glue set.] Given a set $S = \{v_1, \ldots v_z\}$, we define $\hat{U} = U \cup S$ and $\glues(S,\mathcal{A}) \subseteq \Pi(\hat{U}) \times \mathbb{N}$ as\\
$\glues(S, \mathcal{A}) = \glue(\{v_1,v_2\}, \glue(\{v_1,v_3\}, \ldots, \glue(\{v_1,v_z\},\mathcal{A})\ldots)$.

\end{description}

\begin{proposition}[\cite{BoCyKrNe2015}]
  \label{prop:op}
  Each of the operations union, insert, shift, glue, and project can be carried out in time $s\cdot  |U|^{\Ocal(1)}$,  where $s$ is the size of the input of the operation.
Given two weighted partitions $\mathcal{A}$ and $\mathcal{B}$, $\join(\mathcal{A}, \mathcal{B})$ can be computed in time $|\mathcal{A}| \cdot |\mathcal{B}|\cdot |U|^{\Ocal(1)}$.
\end{proposition}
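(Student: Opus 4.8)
\medskip
\noindent\textbf{Proof plan.}
The plan is to handle each operation separately and, after fixing an efficient representation of partitions, to reduce each one to a bounded number of elementary manipulations of partitions of $U$ (one per element of the input weighted-partition set, or one per pair of elements in the case of $\join$), followed by a single call to $\rmc$. The claimed bounds then follow by summing these costs.

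First I would fix the data structure. A partition $p\in\Pi(U)$ is stored as an array indexed by $U$ assigning to each element the identifier of its block, normalised so that block identifiers are ordered by the smallest element they contain; this yields a canonical code of length $\Ocal(|U|)$ that can be compared (or hashed) in time $|U|^{\Ocal(1)}$. With this representation one checks easily that, in time $|U|^{\Ocal(1)}$: the projection $p_{\downarrow X}$ and the lift $p_{\uparrow X}$ are computable by a single scan; the meet $p\sqcap q$ of two partitions of the same ground set is computable by running a union--find over $U$ with the blocks of $p$ and of $q$ as the sets to be merged and then re-canonicalising; and the test $p\sqsubseteq U[ee']$ reduces to checking whether $e$ and $e'$ lie in the same block of $p$. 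I would then implement $\rmc(\mathcal{A})$ by sorting the pairs of $\mathcal{A}$ by the canonical code of their partition and, within each maximal run of equal partitions, keeping only one pair of minimum weight; this costs $|\mathcal{A}|\cdot|U|^{\Ocal(1)}$, which is within the claimed bound since $|\mathcal{A}|\le s$.

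With these primitives in place, the five cheap operations are immediate. $\shift(w',\mathcal{A})$ adds $w'$ to every weight in one pass. $\ins(X,\mathcal{A})$ replaces each $(p,w)$ by $(p_{\uparrow U\cup X},w)$, one lift per element. $\mathcal{A}\cuparrow\mathcal{B}$ is $\rmc$ applied to the concatenation of $\mathcal{A}$ and $\mathcal{B}$, an input of size $|\mathcal{A}|+|\mathcal{B}|\le s$. $\glue(S,\mathcal{A})$ replaces each $(p,w)$ by $(\hat{U}[S]\sqcap p_{\uparrow\hat{U}},w)$ --- one lift and one meet per element --- and then calls $\rmc$; and $\glue_w$ is $\glue$ followed by $\shift$. $\proj(X,\mathcal{A})$ scans each $(p,w)$, discards it unless every $e\in X$ shares its block with some element of $\overline{X}$ (equivalently, no block of $p$ is contained in $X$), replaces the survivors by $(p_{\downarrow\overline{X}},w)$, and calls $\rmc$. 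In every case the work is $|\mathcal{A}|\cdot|U|^{\Ocal(1)}$ (plus the symmetric term for $\mathcal{B}$ in $\cuparrow$), hence $s\cdot|U|^{\Ocal(1)}$.

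Finally, $\join(\mathcal{A},\mathcal{B})$ iterates over all $|\mathcal{A}|\cdot|\mathcal{B}|$ pairs $\big((p,w_1),(q,w_2)\big)$, for each one lifting $p$ and $q$ to $\hat{U}=U\cup U'$, forming the meet $p_{\uparrow\hat{U}}\sqcap q_{\uparrow\hat{U}}$, and emitting the pair $(p_{\uparrow\hat{U}}\sqcap q_{\uparrow\hat{U}},\,w_1+w_2)$; one final $\rmc$ on the resulting set of size $|\mathcal{A}|\cdot|\mathcal{B}|$ cleans up. Each pair costs $|U|^{\Ocal(1)}$, so the total is $|\mathcal{A}|\cdot|\mathcal{B}|\cdot|U|^{\Ocal(1)}$, as claimed. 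The only point that demands care --- and the nearest thing to an obstacle here --- is ensuring the partition encoding is genuinely canonical, so that $\rmc$ recognises two partitions as equal no matter how they were produced; the normalisation of block identifiers above takes care of this, and the remainder is bookkeeping.
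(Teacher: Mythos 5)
This proposition is imported verbatim from Bodlaender \emph{et al.}~\cite{BoCyKrNe2015}; the paper you were given states it as a citation and contains no proof of its own, so there is nothing internal to compare your argument against. Your reconstruction is nevertheless correct and matches the standard implementation-level argument given in the cited source: fix a canonical encoding of partitions so that equality testing and $\rmc$ cost $|U|^{\Ocal(1)}$ per element, observe that each of union, insert, shift, glue and project touches each pair of the input once (computing a lift, a meet via union--find, or a block-membership test, each in $|U|^{\Ocal(1)}$ time), and that $\join$ touches each of the $|\mathcal{A}|\cdot|\mathcal{B}|$ pairs once. Your reading of the project condition ($\forall e \in X\,\exists e' \in \overline{X}: p \sqsubseteq U[ee']$ being equivalent to no block of $p$ lying inside $X$) and of the meet $\sqcap$ as the union--find closure of the blocks of the two partitions are both right. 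The only cosmetic caveat is the $\log$ factor incurred by sorting inside $\rmc$; since the number of distinct partitions of $U$ is at most $|U|^{|U|}$, this factor is absorbed into $|U|^{\Ocal(1)}$ (or avoided entirely by hashing), so the stated bounds hold.
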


\begin{corollary}
  \label{coro:op}
  The glue set operation can be carried out in time $z \cdot s\cdot (|U| + z)^{\Ocal(1)}$,  where $s$ is the size of the input of the operation and $z$ is the size of the input set.
\end{corollary}
\begin{proof}
First notice that given a set $U$, $\mathcal{A} \subseteq \Pi(U) \times \mathbb{N}$, and two elements $u$ and $v$, $|\glue(\{u,v\},\mathcal{A})| \leq |\mathcal{A}|$. Indeed, $\{(\hat{U}[uv] \sqcap p_{\uparrow \hat{U}}, w)\mid (p,w) \in \mathcal{A}\}$ contains at most one element for each element of $\mathcal{A}$ and $\rmc$ produces a subset of its input set. This implies that at each call of $\glue$ in $\glues$, the size of the input of the operation does not increase, except for $U$. Thus, as there are $z-1$ calls to $\glue$, the operation glue set can be carried out in time $z \cdot s\cdot (|U| + z)^{\Ocal(1)}$.
\end{proof}

Given a weighted partition $\mathcal{A} \subseteq \Pi(U) \times \mathbb{N}$ and a partition $q \in \Pi(U)$, we define
$\opt(q,\mathcal{A}) = \min \{w \mid (p,w) \in \mathcal{A}, p \sqcup q = \{U\}\}$.
Given two weighted partitions $\mathcal{A},\mathcal{A}' \subseteq \Pi(U) \times \mathbb{N}$, we say that $\mathcal{A}$ \emph{represents} $\mathcal{A}'$ if for each $q \in \Pi(U)$, $\opt(q,\mathcal{A}) = \opt(q,\mathcal{A}')$.

Given a set $Z$ and a function $f \colon 2^{\Pi(U) \times \mathbb{N}} \times Z \to 2^{\Pi(U) \times \mathbb{N}}$, we say that $f$ \emph{preserves representation} if for each two weighted partitions $\mathcal{A},\mathcal{A}' \subseteq \Pi(U) \times \mathbb{N}$ and each $z \in Z$, it holds that if $\mathcal{A}'$ represents $\mathcal{A}$  then $f(\mathcal{A}', z)$ represents $f(\mathcal{A},z)$.

\begin{proposition}[\cite{BoCyKrNe2015}]
  \label{lemma:repr}
  The union, insert, shift, glue, project, and join operations preserve representation.
\end{proposition}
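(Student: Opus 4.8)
The plan is to treat the six operations one at a time and, for each, exhibit a closed-form expression that computes $\opt(q,\cdot)$ of the output from the values $\opt(\cdot,\cdot)$ of the input(s) at partitions depending only on $q$ and the operation's parameters. Since two weighted partitions represent each other exactly when the functions $q\mapsto\opt(q,\cdot)$ they induce coincide, any such expression immediately yields the claimed preservation: substituting a representative for an input cannot change the output's $\opt$-profile. I would first record three elementary facts used throughout. First, the relation ``represents'' is symmetric and transitive, being an equality of functions; in particular, for the two binary operations it suffices to fix one operand and, if the two-sided statement is wanted, combine with the symmetric one. Second, $\rmc(\mathcal{A})$ keeps a minimum-weight copy of every partition, so $\opt(q,\rmc(\mathcal{A}))=\opt(q,\mathcal{A})$, and the $\rmc$ in the definitions of $\glue$, $\proj$ and $\join$ is invisible to $\opt$. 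Third, $\sqcap$ is commutative and associative, and $p\sqcap q$ merges two elements precisely when they are linked by a chain alternating ``same block of $p$'' and ``same block of $q$''; hence $p\sqcap q$ is the minimum element of the lattice iff the corresponding union graph on the ground set is connected.

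For $\shift$, $\mathcal{A}\cuparrow\mathcal{B}$, $\ins$, $\glue$ and $\proj$ the required identities are direct: $\opt(q,\shift(w',\mathcal{A}))=w'+\opt(q,\mathcal{A})$ and $\opt(q,\mathcal{A}\cuparrow\mathcal{B})=\min\{\opt(q,\mathcal{A}),\opt(q,\mathcal{B})\}$. For $\glue$ and $\proj$ the key is a small lemma: an element that receives edges from only one of the two partitions cannot, in the union graph, connect two elements lying in distinct blocks of that partition, so deleting such an element, or restricting the partitions away from it, affects connectivity in a controlled way. Using this and associativity of $\sqcap$ one gets, for $S\subseteq U$ and $q\in\Pi(U)$, $\opt(q,\glue(S,\mathcal{A}))=\opt(U[S]\sqcap q,\mathcal{A})$ (whence $\glue_w$ follows by composing with $\shift$), and $\opt(q,\proj(X,\mathcal{A}))=\opt(q_{\uparrow U},\mathcal{A})$, where the side condition in the definition of $\proj$ (no block of $p$ contained in $X$) is exactly what forces each element of $X$ to attach to $\overline{X}$ in the union graph of $p$ and $q_{\uparrow U}$. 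The same lemma gives $\opt(q,\ins(X,\mathcal{A}))=\opt(q_{\downarrow U},\mathcal{A})$ under the natural condition on $q$ and $+\infty$ otherwise. In each case the output's $\opt$-profile is a fixed function of the input's, so representation is preserved. The general-ground-set forms of $\glue$ (with $S\not\subseteq U$) and $\ins$ need a little extra bookkeeping on the added elements, but in the tree-decomposition application the relevant ground sets all coincide with, or differ by one vertex from, the current bag, so the clean forms above suffice, and I would present those.

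The $\join$ operation is the one step that does not reduce to such an identity, and it is the main obstacle. Unfolding the definition and discarding $\rmc$, $\opt(q,\join(\mathcal{A},\mathcal{B}))$ is the minimum of $w_1+w_2$ over pairs $(p,w_1)\in\mathcal{A}$ and $(r,w_2)\in\mathcal{B}$ with $p_{\uparrow\hat U}\sqcap r_{\uparrow\hat U}\sqcap q$ equal to the minimum element of $\Pi(\hat U)$; this is a \emph{coupled} minimum over the two sets, not a function of their $\opt$-profiles taken separately. The fix is to substitute the two coordinates one at a time. Given a realizing pair $(p,w_1),(r,w_2)$, set $q''=p_{\uparrow\hat U}\sqcap q$ (and, when the ground sets differ, restrict it back to the ground set of $\mathcal{B}$, which is legitimate by the auxiliary-element lemma applied to $\hat U$ minus that ground set); associativity of $\sqcap$ says the connectivity condition is exactly that $r$ together with $q''$ gives the minimum element, so $\opt(q'',\mathcal{B})\le w_2$. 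If $\mathcal{B}'$ represents $\mathcal{B}$ then $\opt(q'',\mathcal{B}')=\opt(q'',\mathcal{B})\le w_2$, so there is $(r',w_2')\in\mathcal{B}'$ with $w_2'\le w_2$ that together with $p$ still satisfies the connectivity condition for $q$; hence $\opt(q,\join(\mathcal{A},\mathcal{B}'))\le\opt(q,\join(\mathcal{A},\mathcal{B}))$, and equality follows since $\mathcal{B}$ likewise represents $\mathcal{B}'$. Since $\join$ is symmetric, the same argument applied to the first coordinate shows that replacing $\mathcal{A}$ by a representative also leaves the $\opt$-profile unchanged, which is precisely preservation of representation for $\join$. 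I expect the only genuine work to lie in making the auxiliary-element lemma precise enough to justify the lifts and restrictions used above; the rest is bookkeeping with the lattice operations.
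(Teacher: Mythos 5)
The paper gives no proof of this proposition---it is imported verbatim from Bodlaender \emph{et al}.~\cite{BoCyKrNe2015}---and your argument is a correct reconstruction of the proof given there, following the same strategy: exact $\opt$-profile identities for $\rmc$, union, insert, shift, glue and project (with the connectivity-via-auxiliary-elements lemma handling the lifts $p_{\uparrow}$ and restrictions $p_{\downarrow}$), and a one-coordinate-at-a-time substitution argument for $\join$, which is indeed the only operation where the coupled minimum prevents a direct profile identity. Your treatment is sound as written; the only places needing the extra care you already flag are the side conditions on $q$ for $\ins$/$\proj$ and the restriction of $q''$ to the second operand's ground set in the $\join$ exchange step.
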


As the glue set operation is only a succession of glue operations, we also have that the glue set operation preserves representation.

\begin{corollary}
  \label{coro:repr}
  The glue set operation preserves representation.
\end{corollary}

\begin{theorem}[\cite{BoCyKrNe2015}]
  \label{th:reduce}
  There exists an algorithm $\reduce$ that, given a set of weighted partitions $\mathcal{A} \subseteq \Pi(U) \times \mathbb{N}$, outputs in time
  $|\mathcal{A}|\cdot 2^{(\omega-1) |U|}\cdot |U|^{\Ocal(1)}$ a set of weighted partitions $\mathcal{A}'\subseteq \mathcal{A}$ such that $\mathcal{A}'$ represents $\mathcal{A}$ and $|\mathcal{A}'| \leq 2^{|U|}$, where $\omega$ denotes the matrix multiplication exponent.
\end{theorem}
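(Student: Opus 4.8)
The plan is to reduce the statement to a rank bound over $\mathrm{GF}(2)$ together with a single application of fast Gaussian elimination. Fix an element $u_0\in U$ (the case $U=\emptyset$ is trivial, since then $|\Pi(U)|=1$), and let $\mathcal{C}=\{X\subseteq U : u_0\in X\}$, so $|\mathcal{C}|=2^{|U|-1}$. To each $p\in\Pi(U)$ associate the vector $\mathbf{v}_p\in\mathrm{GF}(2)^{\mathcal{C}}$ with $\mathbf{v}_p[X]=1$ iff every block of $p$ is contained in $X$ or in $U\setminus X$ (equivalently, for $X\neq U$, iff $\{X,U\setminus X\}\sqsubseteq p$; and $\mathbf{v}_p[U]=1$ always). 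Let $V$ be the $\Pi(U)\times\mathcal{C}$ matrix over $\mathrm{GF}(2)$ whose rows are the $\mathbf{v}_p$.

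First I would establish the key parity identity: $(VV^{T})[p,q]=[\,p\sqcap q=\{U\}\,]$ for all $p,q\in\Pi(U)$. Indeed, $(VV^{T})[p,q]$ is, modulo $2$, the number of $X\in\mathcal{C}$ for which both $p$ and $q$ refine the cut $(X,U\setminus X)$; such an $X$ is exactly a set that is simultaneously a union of blocks of $p$ and of blocks of $q$, which one checks is the same as being a union of blocks of the meet $p\sqcap q$ (the latter being the partition into connected components of the union of the two ``block-clique'' graphs on $U$). Fixing the block of $p\sqcap q$ containing $u_0$ and choosing the remaining blocks freely, this count equals $2^{c-1}$, where $c$ is the number of blocks of $p\sqcap q$, which is odd iff $c=1$, i.e.\ iff $p\sqcap q=\{U\}$. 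In particular the matrix $\mathcal{M}$ with $\mathcal{M}[p,q]=[\,p\sqcap q=\{U\}\,]$ satisfies $\mathrm{rank}_{\mathrm{GF}(2)}\mathcal{M}\le 2^{|U|-1}\le 2^{|U|}$, and $\opt(q,\mathcal{A})=\min\{w : (p,w)\in\mathcal{A},\ \mathbf{v}_p\cdot\mathbf{v}_q=1\}$, so ``$\mathcal{A}'$ represents $\mathcal{A}$'' becomes a purely linear-algebraic condition on the rows $\mathbf{v}_p$.

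The algorithm $\reduce$ then does the following: replace $\mathcal{A}$ by $\rmc(\mathcal{A})$ so each partition occurs once, sort its elements by nondecreasing weight (ties broken consistently), and greedily keep $(p,w)$ in $\mathcal{A}'$ precisely when $\mathbf{v}_p$ is not in the $\mathrm{GF}(2)$-span of the already-chosen rows. Since all $\mathbf{v}_p$ lie in a space of dimension at most $2^{|U|-1}$, we get $|\mathcal{A}'|\le 2^{|U|-1}\le 2^{|U|}$, and $\mathcal{A}'\subseteq\mathcal{A}$ by construction. For correctness I would verify $\opt(q,\mathcal{A}')=\opt(q,\mathcal{A})$ for every $q$: ``$\ge$'' is immediate from $\mathcal{A}'\subseteq\mathcal{A}$; for ``$\le$'', pick $(p,w)\in\mathcal{A}$ with $\mathbf{v}_p\cdot\mathbf{v}_q=1$ and $w$ minimum. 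If $(p,w)$ was kept, it witnesses $\opt(q,\mathcal{A}')\le w$; otherwise, at the moment $(p,w)$ was processed we have $\mathbf{v}_p=\sum_i\mathbf{v}_{p_i}$ for some kept $(p_i,w_i)\in\mathcal{A}'$ with $w_i\le w$, so $1=\mathbf{v}_p\cdot\mathbf{v}_q=\sum_i\mathbf{v}_{p_i}\cdot\mathbf{v}_q$ forces $\mathbf{v}_{p_i}\cdot\mathbf{v}_q=1$ for some $i$, and then $\opt(q,\mathcal{A}')\le w_i\le w$.

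Finally, for the running time: a single entry $\mathbf{v}_p[X]$ is computed in $|U|^{\Ocal(1)}$ time and a full row in $2^{|U|}\cdot|U|^{\Ocal(1)}$, so forming all rows of $\mathcal{A}$ costs $|\mathcal{A}|\cdot 2^{|U|}\cdot|U|^{\Ocal(1)}$, and $\rmc$ plus the sort cost $|\mathcal{A}|\cdot|U|^{\Ocal(1)}$. The greedy independent-set selection among the $|\mathcal{A}|$ rows of length $m=2^{|U|-1}$, performed in the fixed weight order, is exactly the computation of the pivot rows of a top-to-bottom row echelon form, which can be carried out in $|\mathcal{A}|\cdot m^{\omega-1}\cdot|U|^{\Ocal(1)}=|\mathcal{A}|\cdot 2^{(\omega-1)|U|}\cdot|U|^{\Ocal(1)}$ time by recursive block elimination using fast matrix multiplication; summing the three contributions gives the claimed bound. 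I expect the main obstacle to be the parity identity $(VV^{T})[p,q]=[\,p\sqcap q=\{U\}\,]$ — keeping the direction of the coarsening lattice straight and the block-counting clean — since everything downstream is standard linear algebra and bookkeeping; a secondary point that needs care is organizing the elimination so that the weight order is respected while still invoking fast matrix multiplication (e.g.\ by processing weight classes in blocks).
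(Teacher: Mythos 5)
The paper offers no proof of this statement: it is imported as a black box from Bodlaender \emph{et al}.~\cite{BoCyKrNe2015}. Your reconstruction is correct and is essentially the argument given in that reference --- the $\mathrm{GF}(2)$ factorization $(VV^{T})[p,q]=[\,p\sqcap q=\{U\}\,]$ via cuts through a fixed $u_0$, bounding the rank by $2^{|U|-1}$, followed by selecting a row basis in order of nondecreasing weight using fast Gaussian elimination --- so there is nothing to compare beyond noting the match.
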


In the remainder of this section, the weight of the weighted partitions will be irrelevant.
In particular, we can assume that all our weighted partitions will be subsets of $\Pi(U) \times \{0\}$.
Because of this, we will abuse the notation and consider the weighted partitions as subsets of $\Pi(U)$, keeping in mind that the weight associated to each partition exists but is irrelevant.

Inspired by the algorithm provided by~\cite{BoCyKrNe2015} in their Section 3.5, we present an algorithm solving {\textsc{Restricted Leafed Induced Subtree}} on an instance $(G,v_0,a,b)$.
Given a nice tree decomposition of $G$ of width $w$, we define a nice tree decomposition  $((T,\Xcal), r, \Gcal)$ of $G$ of width at most $w+1$
such that the only empty bags are the root and the leaves and for each $t \in T$, if $X_t \not = \emptyset$ then $v_0 \in X_t$.
Note that this can be done in linear time.
For each bag $t$,  each integer $i$, $j$, and $\ell$,
each function $\sbf \colon X_t \rightarrow \{0,1,2_0,2_1\}$,
and each partition $p \in \Pi(\sbf^{-1}(1))$,
we define:

\begin{eqnarray*}
  \Ecal_t(p,\sbf, i,j,\ell) & = &
                                  \{(I,L) \mid (I,L) \in 2^{V_t} \times  2^{V_t},\ I \cap L = \emptyset,\\
                            &&~~~ |I|  = i,\ |L| = \ell, |E(G_t[I])| = j,\\
                            && ~~~ \forall v \in L \cap X_t,\ \sbf(v) = 2_z\mbox{ with }z = {\degs{G_t[I \cup L]}{v}},\\
                            && ~~~ \forall v \in L \setminus X_t,\ {\degs{G_t[I \cup L]}{v} = 1},\\
                            &&~~~  \forall u,v \in L, \{u,v\} \not \in E(G_t),\\
                                &&~~~ I \cap X_t = \sbf^{-1}(1),\ v_0 \in X_t \Rightarrow \sbf(v_0) = 1, \\ 
                            &&~~~ \forall u \in (I \cup L)\setminus X_t : \mbox{ either $t$ is the root and $I$ is connected or } \\
                            &&~~~~~~~~~~~~~\exists v \in \sbf^{-1}(1): \mbox{$u$ and $v$ are connected in $G_t[I \cup L]$,}\\
                            &&~~~ \forall u,v \in \sbf^{-1}(1): \sbf^{-1}(1)[\{u,v\}] \sqsubseteq p \Leftrightarrow  \\ 
                            &&~~~~~~~~~~~~~~~~~~~~~~~~~~~~~~~~~\mbox{$u$ and $v$ are connected in $G_t[I \cup L]$} \},\\
  \\
  \Acal_t(\sbf, i,j,\ell) & = &
                                       \{p \mid p \in \Pi(\sbf^{-1}(1)),\ \Ecal_t(p,\sbf,i,j,\ell) \not = \emptyset\}.\\
\end{eqnarray*}

In the definition of $\Ecal_t$, the set $I$ (resp. $L$) corresponds to the internal vertices (resp. the leaves) of the required induced subtree when restricted to $G_t$.
In the algorithm, we ensure that $I$ will induce a tree and that each vertex in $L$ will be connected to $I$ by exactly one edge.
In order to verify that the vertices in $I$ induce a tree, we will simply check that $G[I]$ is connected and has exactly $|I|-1$ edges. For the leaf, during the algorithm, we divide them into two groups, those that are labeled $2_0$ that have no neighbors in $I$ yet, and those labeled $2_1$ that already has exactly one neighbor in $I$.
Note that as we want to maximize the number of leaves, we do not care whether the vertices in $I$ are leaves or not.
Thus, a given instance $(G,v_0,a,b)$ of {\textsc{Restricted Leafed Induced Subtree}} is a positive instance if and only if for some  $\ell \geq b$ and $i = a - \ell$, we have $\Acal_r(\ef,i,i-1,\ell) \not = \emptyset$.

For each $t \in V(T)$, we assume that we have already computed $\Acal_{t'}$ for every child $t'$ of $t$, and we proceed to the computation of $\Acal_t$.
We distinguish
several cases depending on the type of node $t$.

\begin{description}
\item[Leaf.] By definition of $\Acal_t$, we have $\Acal_t(\ef,0,0,0) = \{\emptyset\}$.

\item[Introduce vertex.]
  Let $v$ be the insertion vertex of $X_t$, let $t'$ be the child of $t$, let $\sbf \colon X_t \rightarrow \{0,1,2_0,2_1\}$, and let $H = G_t[\sbf^{-1}(1)]$.
  \begin{itemize}
  \item If  $v = v_0$ and $\sbf(v_0) \in \{0,2_0,2_1\}$, then by definition of $\Acal_t$ we have that $\Acal_t(\sbf,i,j,\ell) = \emptyset$.
  \item Otherwise, if $v = v_0$, then by construction of the nice tree decomposition,
    we know that $t'$ is a leaf of $T$ and so
    $\sbf = \{(v_0,1)\}$, $j = \ell = i-1 = 0$ and $\Acal_t(\sbf, i,j,\ell) = \ins(\{v_0\}, \Acal_{t'}(\ef,0,0,0)$).
  \item Otherwise, if $\sbf(v) = 0$, then, by definition of $\Acal_t$, it holds that $\Acal_t(\sbf,i,j,\ell) = \Acal_{t'}(\sbf|_{X_{t'}},i,j,\ell)$.
    
  \item Otherwise, if $\sbf(v) = 2_z$, $z \in \{0,1\}$, then let $R=N_{G_t[X_t]}(v)\setminus \sbf^{-1}(0)$.
    If $|R| = z$ and $R \subseteq \sbf^{-1}(1)$ then $\Acal_t(\sbf, i,j,\ell) = \Acal_{t'}(\sbf|_{X_{t'}},i,j,\ell-1)$.
    Otherwise $\Acal_t(\sbf, i,j,\ell) = \emptyset$.
  \item
    Otherwise, we know that $v \not = v_0$ and $\sbf(v) = 1$.
    If $N_{G_t[X_t]}(v) \cap \sbf^{-1}(2_0) \not = \emptyset$, then $\Acal_t(\sbf, i,j,\ell) = \emptyset$.
    Otherwise,
    with $L_1 = N_{G_t[X_t]}(v) \cap  \sbf^{-1}(2_1)$, 
    let $\sbf' \colon X_{t'} \to \{0,1,2_0,2_1\}$ defined such that $\forall v' \in X_{t'} \setminus L_1$, $\sbf'(v') = \sbf(v')$ and for each $v' \in L_1$, $\sbf'(v') = 2_{0}$.
    Then 
  \begin{eqnarray*}
      \Acal_t(\sbf,i,j,\ell) & =  \glues(N_{H}[v],\ins(\{v\}, \Acal_{t'}(\sbf',i-1,j - |N_H(v)|,\ell))),
  \end{eqnarray*}
  where $N_{H}(v)$ denotes the open neighborhood of $v$ in $H$, i.e., all the neighbors $u$ of $v$ in $G_t[X_t]$ such that $s(u) = 1$, and $N_{H}[v]$ denotes the close neighborhood of $v$ in $H$, i.e., the set $N_{H}(v) \cup \{v\}$.
  \end{itemize}

\item[Forget vertex.]
  Let $v$ be the forget vertex of $X_t$, let $t'$ be the child of $t$, and let  $\sbf \colon X_t \rightarrow \{0,1,2_0,2_1\}$.
  As a vertex from the leaf part can be removed only if it has exactly one neighbor and all other vertices can be removed safely, we obtain that
  \begin{eqnarray*}
    \Acal_t(\sbf,i,j,\ell) &=&
                               \mathcal{A}_{t'}(\sbf\cup \{(v,0)\},i,j,\ell)\\
                           &&\cuparrow \proj(\{v\},\mathcal{A}_{t'}(\sbf \cup \{(v,1)\},i,j,\ell))\\
                           && \cuparrow \mathcal{A}_{t'}(\sbf\cup \{(v,2_1)\},i,j,\ell).\\
  \end{eqnarray*}
\item[Join.]
  Let $t'$ and $t''$ be the two children of $t$, let $\sbf \colon X_t \rightarrow \{0,1,2_0,2_1\}$, and let $H =G_t[\sbf^{-1}(1)]$.
%
%
  Given three functions $\sbf^*,\sbf', \sbf'' \colon X_t \rightarrow \{0,1,2_0,2_1\}$, we say that $\sbf^* = \sbf' \oplus_t \sbf''$ if for each $v \in \sbf^{*-1}(\{0,1\})$, $\sbf^*(v) = \sbf'(v) = \sbf''(v)$, and for each $v \in X_t$ such that $\sbf^*(v) = 2_z$, $z \in \{0,1\}$, we either have:
  \begin{itemize}
  \item $N_{G_t}(v) \cap \sbf^{*-1}(1) = \emptyset$, $s'(v)=2_{z'}$, $s''(v)=2_{z''}$, and $z = z' + z''$, or
  \item $|N_{G_t}(v) \cap \sbf^{*-1}(1)| = 1$ and $\sbf^*(v) = \sbf'(v) = \sbf''(v) = 2_1$.
  \end{itemize}
  Two entries $A_{t'}(\sbf',i',j',\ell')$ and $A_{t''}(\sbf'',i'',j'',\ell'')$ are compatible if  $\sbf' \oplus_t \sbf''$ is defined.
  We join every pair of compatible entries
  $A_{t'}(\sbf',i',j',\ell')$ and $A_{t''}(\sbf'',i'',j'',\ell'')$.
  In this case, notice that the internal vertices remain internal vertices, and the same for the leaves.
  We only have to check the size of the neighborhood of each leaf depending on whether the neighbors are within $X_t$ or not.
  We obtain that
  \begin{eqnarray*}
    \Acal_t(\sbf,i,j,\ell) &=&\!\!\!\!\!\!\!\!\!\!
                                      \underset{\ell' + \ell'' = \ell + | \sbf^{*-1}(\{2_0,2_1\})| }{
                                      \underset{j'+j'' = j + |E(H)|}{
                                      \underset{i'+i''=i+|V(H)|}{
                                      \underset{\sbf = \sbf' \oplus_t \sbf''}{
                                      {
                                      \underset{\sbf', \sbf'' \colon X_t \rightarrow \{0,1,2_0,2_1\},}{
                                      \underset{}
                                      \bigcuparrow
                                      }}}}}}\!\!\!\!\!\!\!
                                      \join(\mathcal{A}_{t'}(\sbf',i',j',\ell'), \mathcal{A}_{t''}(\sbf'', \rbf'',i'',j'',\ell'')).
  \end{eqnarray*}


\end{description}

\begin{theorem}
  {\textsc{Restricted Leafed Induced Subtree}} can be solved in time $2^{\Ocal(\tw)}\cdot \tw^{\Ocal(1)} \cdot  n^7$.
\end{theorem}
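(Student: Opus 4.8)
The plan is to prove correctness and running time of the dynamic programming described above, following the standard template of Bodlaender \emph{et al.}~\cite{BoCyKrNe2015}. First I would establish the semantic invariant: for every node $t$, the weighted partition $\Acal_t(\sbf,i,j,\ell)$ (where the weight of $p$ is the maximum number of leaves in $V_t \setminus X_t$, or rather, since the framework minimizes, one stores $\ell$ restricted appropriately) \emph{represents}, in the sense of Proposition~\ref{lemma:repr}, the set $\Ecal_t(p,\sbf,i,j,\ell)$ of all valid partial solutions. The key point is that a partial solution restricted to $G_t$ consists of a set $I$ of internal vertices inducing a forest with $j$ edges, a set $L$ of leaves each adjacent to at most one vertex of $I$, with the connectivity data recorded by the partition $p$ on $\sbf^{-1}(1)$; a global solution is obtained precisely when, at the root, the forest has become a single tree, i.e.\ $\Acal_r(\ef,i,i-1,\ell)\neq\emptyset$ with $i+\ell=a$ and $\ell\geq b$.

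Next I would verify, node type by node type, that the recurrences implement the corresponding operations on weighted partitions and hence, by Proposition~\ref{lemma:repr}, preserve representation. For the \textbf{leaf} and \textbf{introduce vertex} nodes this is a case analysis mirroring the five bullet points: a vertex labeled $0$ is simply absent; a vertex labeled $2_z$ is a leaf whose number of already-realized edges to $I$ must equal $z$ and whose neighbors in the current bag that lie in $I$ are exactly accounted for; a vertex labeled $1$ is inserted into $I$ via $\ins$, its incident bag-edges to $I$ are added (shifting $j$ by $|N_H(v)|$), the partition is updated by $\glue(N_H[v],\cdot)$ to merge the blocks of its $I$-neighbors with the new singleton, and any $2_1$-neighbor that was "waiting" is relabeled $2_0$ in the child because its unique edge is now the one to $v$. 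For the \textbf{forget} node, a forgotten vertex is either outside the solution, an internal vertex (requiring $\proj$ to discard partial solutions in which it is not connected to the rest, and then forget it), or a leaf already attached by its one edge (label $2_1$); importantly, if the forgotten vertex is $v_0$ it must have label $1$, which is enforced by the construction that keeps $v_0$ in every nonempty bag and by $\sbf(v_0)=1$. For the \textbf{join} node, two partial solutions on $G_{t'}$ and $G_{t''}$ are merged: a vertex internal on one side is internal on the other (the $\sbf'(v)=\sbf''(v)$ clause on $\sbf^{-1}(\{0,1\})$), a leaf-slot $2_z$ on $t$ is realized on exactly one side and unused on the other, and $\join$ merges the connectivity partitions while the counters add up with the usual inclusion–exclusion correction $i=i'+i''-|V(H)|$, $j=j'+j''-|E(H)|$; one must check the no-edge-between-leaves and unique-neighbor conditions are preserved, which they are since $E(G_{t'})\cap E(G_{t''})=\emptyset$ and bag edges are introduced only once.

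The crucial ingredient — and the step I expect to be the main obstacle — is interleaving the $\reduce$ algorithm of Theorem~\ref{th:reduce} with the recurrences so that the tables never exceed size $2^{|X_t|}\cdot n^{\Ocal(1)}$. After every node's table is computed, one applies $\reduce$ to each of the $4^{|X_t|}\cdot(a+1)^3 = 2^{\Ocal(\tw)}\cdot n^3$ entries $(\sbf,i,j,\ell)$ separately (note $i,j,\ell \leq a \leq n$), bringing each $\Acal_t(\sbf,i,j,\ell)$ down to at most $2^{|\sbf^{-1}(1)|} \leq 2^{|X_t|}$ partitions while preserving representation by Proposition~\ref{lemma:repr} applied along the whole subtree below $t$; since $\opt$ at the root only inspects whether the all-merged partition is achievable, representation at the root suffices for correctness. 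The dominating cost is the join node: for each of $2^{\Ocal(\tw)}$ choices of $\sbf$ and each of the $O(n^3)$ triples $(i,j,\ell)$, one ranges over $\sbf',\sbf''$ with $\sbf=\sbf'\oplus\sbf''$ (again $2^{\Ocal(\tw)}$ many) and over the $O(n^2)$ ways to split two of the three counters, computing a $\join$ of two reduced tables of size $\leq 2^{|X_t|}$ each in time $2^{2|X_t|}\cdot n^{\Ocal(1)}$ by Proposition~\ref{prop:op}, followed by one more $\reduce$. Multiplying, a join node costs $2^{\Ocal(\tw)}\cdot n^{\Ocal(1)}$; a careful bookkeeping of the polynomial factors (three counters bounded by $n$ at the node, two more from splitting at a join, one from the $|X_t|\leq n$ factors in the operation costs, and the linear number of nodes) yields the claimed $2^{\Ocal(\tw)}\cdot n^7$, and I would close by remarking that the correctness of using $\reduce$ after each step is exactly the "representation is preserved under composition" argument of~\cite{BoCyKrNe2015}.
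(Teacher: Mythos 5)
Your proposal is correct and follows essentially the same route as the paper: interleave the $\reduce$ algorithm of Theorem~\ref{th:reduce} with the recurrences, invoke Proposition~\ref{lemma:repr} to propagate representation up to the root, and bound the running time by counting the intermediate table sizes with the join node dominating, yielding $2^{\Ocal(\tw)}\cdot n^6$ per node and $2^{\Ocal(\tw)}\cdot n^7$ overall. The only difference is that you fold the node-by-node correctness verification into the proof, whereas the paper treats it in the algorithm description and keeps the proof itself to the representation and running-time arguments.
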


\begin{proof}
  As discussed in Section~\ref{sec:prelim}, we can assume that a nice tree decomposition of width $\ww = 2 \cdot \tw + 1$
is given with the input.
  The algorithm works in the following way.
  For each node $t \in V(T)$ and for each entry $M$ of its table, instead of storing $\Acal_t(M)$, we store $\Acal'_t(M) = \reduce(\Acal_t(M))$ by using Theorem~\ref{th:reduce}.
  As each of the operations we use preserves representation by Proposition~\ref{lemma:repr} and Corollary~\ref{coro:repr}, we obtain that for each node $t \in V(T)$ and for each possible entry $M$, $\Acal'_t(M)$ represents $\Acal_t(M)$.
  In particular, we have that $\Acal'_r(M) = \reduce(\Acal_r(M))$ for each possible entry $M$.
  Using the definition of $\Acal_r$ we have that $(G,v_0,a,b)$ is a positive instance of 
  {\textsc{Restricted Leafed Induced Subtree}} 
  if and only if for some $i$ and $\ell$, $\ell \geq b$, $i+\ell = a$, and $\Acal'_r(\ef,i,i-1,\ell) \not = \emptyset$.

  We now focus on the running time of the algorithm.
  For this we proceed to explain, for each $t \in V(T)$, the running time needed for producing $\mathcal{A}'_t(M)$, for each entry $M$, assuming that for each child $t'$ of $t$ in $T$, the size of $\mathcal{A}'_{t'}(M')$ is of size at most $2^{\ww}$ for each entry $M'$. We distinguish the cases depending on the type of node $t$.
\begin{description}
\item[Leaf.] It is clear that it can be done in time $\Ocal(1)$.
\item[Introduce vertex.] The new set $\mathcal{A}_t$ is constructed using either an insert operation, a glue set operation with a set of size at most $\ww$, or no operation at all. By Proposition~\ref{prop:op} and Corollary~\ref{coro:op}, applying the operations take at most $2^{\ww} \cdot \ww^{\Ocal(1)}$ steps and, by Theorem~\ref{th:reduce}, the reduce algorithm takes at most $2^{\Ocal(\ww)} \cdot \ww^{\Ocal(1)}$ steps.
  As this need to be performed for each of the at most $4^\ww\cdot n^3$ possible entries $M$ of $\mathcal{A}_t$, the introduce vertex  node can be computed in time $2^{\Ocal(\ww)} \cdot \ww^{\Ocal(1)} \cdot n^3$.
\item[Forget vertex.] We take the union of three sets of size at most $2^{\ww}$, each of them can be computed, by Proposition~\ref{prop:op}, in time $2^{\ww} \cdot \ww^{\Ocal(1)}$ and, by Theorem~\ref{th:reduce}, the reduce algorithm runs in time $2^{\Ocal(\ww)} \cdot \ww^{\Ocal(1)}$.
  As this need to be performed for each of the at most $4^\ww\cdot n^3$ possible entries $M$ of $\mathcal{A}_t$, the forget vertex  node can be computed in time $2^{\Ocal(\ww)} \cdot \ww^{\Ocal(1)} \cdot n^3$.
\item[Join.] Given an entry $M' = (\sbf',i',j',\ell')$ of $\mathcal{A}'_{t'}$ and an entry $M''= (\sbf'',i'',j'',\ell'')$ of $\mathcal{A}'_{t''}$, by Proposition~\ref{prop:op}, $\join(\mathcal{A}'_{t'}(M'),\mathcal{A}'_{t''}(M''))$ can be computed in time $2^{2\cdot \ww} \cdot \ww^{\Ocal(1)}$.
  Moreover, the number of times this join operation is used for computing $\mathcal{A}'_t(M)$ for a specific entry $M = (\sbf,i,j,\ell)$ is given by the fact that we consider at most $4^{\ww} $ possible functions $\sbf'$, as many functions $\sbf''$, at most $n+\ww$ choices for $i'$ and $i''$,
  at most $n+ \ww$ choices for $j'$ and $j''$ (as we can always assume, during the algorithm, that $H$ is a forest), and at most $n+\ww$ choices for $\ell'$ and $\ell''$. This implies that for this specific $M$, we execute $ (n+\ww)^3 \cdot 16^{\ww}$ times the join operation and then, by Theorem~\ref{th:reduce}, the reduce algorithm can be executed in time $2^{\Ocal(\ww)} \cdot \ww^{\Ocal(1)}\cdot n^3$. As we need to do it for each of the at most $4^\ww \cdot n^3$ possible entries $M$, the join node can be computed in time $2^{\Ocal(\ww)} \cdot \ww^{\Ocal(1)} \cdot n^6$.
\end{description}
  The theorem follows by taking into account the linear number of nodes in a nice tree decomposition.
\end{proof}

\section{Conclusion}

In this paper we close the open question asked by~\cite{BlCaGoLaNaVa2018}
by providing a polynomial algorithm for  {\textsc{Restricted Leafed Induced Subtree}} when the input graph is chordal as well as an \textsf{FPT} algorithm parameterized by treewidth.

This paper provides a generalization of the algorithms provided by~\cite{BlCaGoLaNaVa2018} and~\cite{AbBlGo2018} and the natural question would be whether it can be even more generalized or if there are other graph classes for which the problem is polynomial.

Another more ambitious direction would be to consider the \textsc{Fully Leafed Induced Subtrees} problem as a specific instance of a class of \textsc{Induced Subtrees} problems and provide a generic algorithm that solves problems of this class on chordal graphs or on graphs of bounded treewidth in a similar way that the one  presented by~\cite{fvBaWa2022} for non-induced subtrees.

\acknowledgements
\label{sec:ack}
The author thanks Élise Vandomme for helpful discussions that lead to the writing of this paper.

\nocite{*}
\bibliographystyle{abbrvnat}
\bibliography{biblio}
\label{sec:biblio}

\end{document}